\DeclareMathOperator{\id}{\mathbbm{1}}
\tikzstyle{every state}=[minimum size=12pt,inner sep=0pt]
\tikzset{
  loop above right/.style={above right, out= 60, in= 30, loop},
  loop above left/.style ={above left,  out=150, in=120, loop},
  loop below right/.style={below right, out=330, in=300, loop},
  loop below left/.style ={below left,  out=240, in=210, loop}}
\NewDocumentCommand{\range}{soms}{
  \IfBooleanTF{#1}{]}{[}%
  \IfValueT{#2}{#2{\,:\,}}%
  #3%
  \IfBooleanTF{#4}{[}{]}%
}
\NewDocumentCommand{\rangeint}{soms}{
  \IfBooleanTF{#1}{\rrbracket}{\llbracket}%
  \IfValueTF{#2}{#2}{1}
  \,:\,%
  #3%
  \IfBooleanTF{#4}{\llbracket}{\rrbracket}%
}
\newcommand\OP{{\textsc{\texttt{Order Problem}}}\xspace}
\newcommand\FP{{\textsc{\texttt{Finiteness Problem}}}\xspace}
\newcommand{\eg}{{\textit{e.g. }}}
\newcommand{\ie}{{\textit{i.e. }}}
\newcommand{\resp}{{\textit{resp. }}}
\newcommand{\set}[1]{ \left\{ #1 \right\} }
\newcommand{\size}[1]{{\left| #1 \right|}}
\newcommand{\dz}{\mathfrak d}
\newcommand{\mz}{\mathfrak m}
\newcommand{\aut}[1]{{\mathcal #1}}
\newcommand{\auta}{\aut{A}}
\newcommand{\mot}[1]{{\mathbf {#1}}}
\newcommand{\alphab}{\Sigma}
\newcommand{\XX}{\Sigma}
\newcommand{\QQ}{Q}
\newcommand{\auttuple}{(Q,\alphab,\delta,\rho)}
\newcommand{\pres}[1]{\langle#1\rangle}
\newcommand{\presp}[1]{\langle#1\rangle_+}
\newcommand{\gauta}{\pres{\auta}}
\newcommand{\Z}{{\mathbb Z}}
\newcommand{\GSk}[1]{S_{#1}}
\newcommand{\sect}[2]{{#1}_{|{#2}}}
\newcommand{\act}[2]{{#1} {\cdot} {#2}}
\newcommand{\Orb}{\text{Orb}}
\newcommand{\pred}{\text{pred}}
\newcommand{\wop}[2]{{\prescript{#1}{}{#2}}}
\newcommand{\wopa}{{\wop{\ell\!\!}{\auta}}}
\newcommand\Pol{{\mathrm{Pol}}}
\newcommand{\lacroix}{\tikz[baseline=-.5ex]{\draw[->,>=latex] (0,0) -- (4ex,0); \draw[->,>=latex] (1.8ex,2ex) -- (1.8ex,-2ex);}}
\title{Generic properties in some classes of~automaton~groups}
\author{Thibault Godin}
\institute{IECL, UMR 7502 CNRS \& Universit\'e de Lorraine\\
IMAG, UMR 5149 CNRS \& Universit\'e de Montpellier\\
 \email{thibault.godin@umontpellier.fr}}
\date{\today}
\begin{document}

\maketitle

\begin{abstract}
	We prove, for various important classes of Mealy automata, that almost all generated groups have an element of infinite order. In certain cases, it also implies other results such as exponential growth.
\end{abstract}


\section{Introduction}

The class of groups generated by Mealy automata presents a considerable variety of behaviours and has been widely used since the eighties as a powerful source of interesting groups~\cite{Gri80,Gri84,Wil04,BaVi05}. It seems natural to try to produce new examples of groups to be studied by picking a random Mealy automaton and considering the group it generates, or to try  to get an interesting distribution over some class of groups starting from a distribution over some class of Mealy automata~\cite{God17}. This approach also raises a natural question: \emph{"how does a typical automaton group look like?"}. In this paper, we tackle this problem and give partial answers for several important and well-studied classes, by proving that automata belonging to the class of reversible, reset, or polynomial activity automata generate with great probability a group having at least one element of infinite order. In particular, it means that these groups are generically infinite and not Burnside.

Another motivation for this paper is that the \OP---how to decide whether an element generates an infinite group---was recently proven undecidable among automaton groups~\cite{Gil18,BaMi17}, while the~\FP--how to decide whether the whole group is infinite---is known to be undecidable for automaton semigroups but remains open for automata groups~\cite{Gil14,DeOl17}. On the other hand, some classes of automaton (semi)groups are known to have decidable \OP~\cite{BBSZ13,BGKP18}. Our results provide statistical answers for these problems.

Depending on the class, we also get stronger or additional statements, among others, the groups generated by reversible or reset Mealy automata have generically exponential growth.\smallskip

The proposed proofs vary strongly with the considered class and rely on the structural properties of the automata. In particular, the case  of general  invertible Mealy automata remains open. 

\smallskip

In order to simplify the statements, we will use the informal "let~\(\auta\) be a random automaton in~\(\mathcal{C}\)" instead of the  formal "let~\(\auta\) be a random variable uniformly distributed over the set~\(\mathcal{C}\)". All probabilistic statements should be understood accordingly.


\section{Automaton groups}

We recall that the \emph{order} of an element~\(g\) of a group~\(G\) is the least (strictly positive) integer~\(\alpha\) such that~\(g^\alpha = \id\). If such an integer does not exist, we say that~\(g\) has infinite order. Equivalently, the order of~\(g\) is the cardinal of the subgroup it generates, hence having an element of infinite order implies the infiniteness of the whole group.

If~\(X\) is a finite set then~\(X^k\) denotes the set of words of length~\(k\), and~\(X^*\) (\resp~\(X^+\)) the set of words of arbitrary (\resp positive) length. We take as a convention that elements of~\(X^\ell, \:\ell>1\) are represented with a bold font.

\subsection{Mealy automata and automaton (semi)groups}

A \emph{Mealy automaton} is  a 4-tuple~\(\auta = \auttuple\) where~\(\QQ\) and~\(\XX\) are finite sets, called the \emph{stateset} and the \emph{alphabet},~\(\delta = {\set{\delta_x : \QQ \to \QQ}}_{x \in \XX} \) is a set of functions called \emph{transition} functions, and~\(\rho = {\set{\rho_q : \XX \to \XX}}_{q \in \QQ} \) is a set of functions called \emph{production} functions. Examples of such  automata are presented on Figure~\ref{fig:Mealy}, and we refer the reader to~\cite{Nek05} for a more complete introduction.\\
\begin{figure}[h!]%
\begin{center}
{
\scalebox{1.125}{
\footnotesize{
	\begin{tikzpicture}[->,>=latex,node distance=12mm]

		\node[state] (c)  {$c$};
				\node[state] (a) [left of=c,below of=c] {$a$};
				\node[state] (b) [left of=c, above of=c] {$b$};
				\node[state] (d) [right of=c, above of=c] {$d$};
				\node[state] (e) [right of=c,below of=c] {$\id$};
			\path
		
				(a)	edge	node[below=.1cm]{\(\begin{array}{c} {0|1} \\ {1|0} \end{array}\)}	(e)
				(b) edge 	node[left=.1cm]{\(0|0\)} (a)
							(b) edge 	node[below=.1cm, pos=0.4]{\(1|1\)} (c)
				(c) edge 	node[above=.1cm, pos=0.6]{\(0|0\)} (a)
				(c) edge 	node[above=.1cm, pos=0.25]{\(1|1\)} (d)
				(d) edge 	node[right=.1cm]{\(0|0\)} (e)
				(d) edge 	node[above=.1cm]{\(1|1\)} (b)
				(e) edge[loop below right] node[below right =-5mm and -1mm] {\(\begin{array}{c} 0|0\\ 1|1 \end{array}\)}	(e)
		
		;
		\scalebox{1}{
		\footnotesize{
\begin{scope}[xshift=3cm, ->,>=latex,node distance=16mm]

	\node[draw,circle,minimum width=19pt,inner sep=0pt] (t) {$p$};
	\node[draw,circle,minimum width=19pt,inner sep=0pt,right of=t] (one) {$\id$};
	\path	

		(t)	edge[loop above] 		node[above]{\(1|0\)}	(t)
		(t)	edge				node[below]{\(0|1\)}	(one)
		(one)	edge[loop right]		node[right,align=center]{\(0|0\) \\\(1|1\)}	(one);

\end{scope}
	}}
	\end{tikzpicture}
	}
	}	
}
\end{center}
\caption{The automaton generating  the Grigorchuk group (left) and  the adding machine, generating~\(\Z\) (right). Both have bounded activity, and~\(p\) has infinite order.}\label{fig:Mealy}
\end{figure}
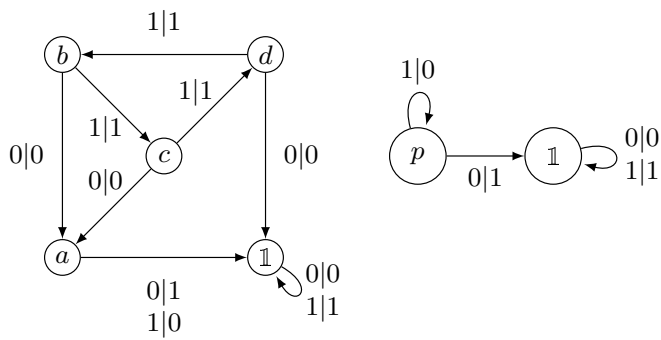
The map~$\rho_q$ extends to a length-preserving map on words~$\rho_q\colon\XX^* \rightarrow \XX^*$ by the recursive definition:
\begin{equation*}
\forall i \in \XX, \ \forall \mot{s} \in \XX^*, \qquad
\rho_q(i\mot{s}) = \rho_q(i)\rho_{\delta_i(q)}(\mot{s}) \:.
\end{equation*}
We can also extend the set of maps $\rho$ to words of states $\mot{u} \in \QQ^*$ by composing the production functions associated with the letters of $\mot{u}$:
\begin{equation*}
 \forall q \in \QQ, \ \forall \mot{u} \in \QQ^*, \qquad
\rho_{q\mot{u}} = \rho_{\mot{u}}\circ \rho_{q}\:.
\end{equation*}

Likewise, we extend the functions~\(\delta\) to words of state and words via
\[\forall i \in \XX, \ \forall \mot{s} \in \XX^*, \ \forall q \in \QQ, \ \forall \mot{u} \in \QQ^*, \quad
\delta_i(q\mot{u}) = \delta_i(q)\delta_{\rho_q(i)}(\mot{u})  \text{ and } \delta_{i\mot{s}} = \delta_i \circ \delta_{\mot{s}}\:.\]
\hspace{-.1cm}
\begin{minipage}[t]{0.46\linewidth}   
For each automaton transition $q~\xrightarrow{x|\rho_q(x)}~\delta_x(q)$, we associate the \emph{cross-transition} depicted in the following way:
\end{minipage}
\hfill
\begin{minipage}[t]{0.55\linewidth}   
\vspace{-.4cm}
\[\begin{array}{ccc}
		& x            	&    \\
q	& \lacroix    	& \delta_x(q)\\
		& \rho_q(x) 			&            		
\end{array}\]
\end{minipage}
\smallskip

The production functions $\rho_q : \XX^* \to \XX^* $ of an automaton $\auta$ generate a semigroup $\presp{\auta}= \{\rho_{\mot{u}} : \XX^* \to \XX^* | \mot{u} \in \QQ^+\}$.

A Mealy automaton is \emph{invertible} when the functions~\(\rho\) are permutations of~\(\XX\). When a Mealy automaton is invertible one can define its \emph{inverse}~\(\auta^{-1}\) by~\[p \xrightarrow{x\mid y} q \ \in \auta \Leftrightarrow p^{-1} \xrightarrow{y\mid x} q^{-1} \in \auta^{-1}\:.\]

Whenever a Mealy automaton is invertible we can consider the~\emph{group}~\(\gauta\) it generates:
\[\pres{\auta} =  \pres{\rho_q \ | \ q \in \QQ} = \set{\rho^{\pm 1}_{\mot{u}}\mid \mot{u}\in \QQ^*}\:.\] 

A group (\resp a semigroup) is an \emph{automaton group} (\resp \emph{semigroup}) if it can be generated by some Mealy automaton.

Given a Mealy automaton~\( \auta = \auttuple\), its \emph{dual} is the Mealy automaton~\(\dz \auta  = \left(\XX,\QQ,\rho,\delta \right)\) where the roles of the stateset and of the alphabet are exchanged. Its~\(\ell\)-th power is the automaton~\(\left(\QQ^\ell,\XX,\delta,\rho \right)\)where the production and transition functions have been naturally extended. We define also the automaton~\(\wopa = \left(\QQ,\XX^\ell,\delta,\rho \right) = \dz(\dz \auta)^\ell\) acting on sequences of~\(\ell\) letters and remark that this operation does not change the generated semigroup, \ie ~\(\presp{\wopa} = \presp{\auta}\).\\

\smallskip

From an algebraic point of view, it is convenient to describe the elements of an automaton group via the so-called wreath recursions. For any~\(g\) in an automaton group~\(\gauta\) on alphabet~\(\XX\) and any word~\(\mot{s} \in \XX^*\), let \(\act{g}{\mot{s}} 
\) denotes the image of~\(\mot{s}\) by~\(g\), and \(\sect{g}{\mot{s}} 
\) the unique~\(h \in \gauta\) satisfying \(\act{g}{(\mot{st})}=(\act{g}{\mot{s}})\act{h}{\mot{t}}\)
for all~\(\mot{t}\in\XX^*\).
The \emph{wreath recursion} of~\(g\) is:
\[g=(\sect{g}{x_1},\ldots,\sect{g}{x_{|\XX|}})\sigma_g,\]
where~\(\sigma_g \in \GSk{\size{\Sigma}}\) denotes the permutation on~\(\XX\) induces by~\(g\).\smallskip

\begin{figure}[h]%
\centering
\begin{tikzpicture}[->,>=latex,node distance=30mm]
		\node[state] (b) {$b$};
		\node[state] (a) [right of=b] {$a$};
		\node[state] (f) [right of=a] {$f$};
		\node[state,node distance=20mm] (c) [below of=b] {$c$};
		\node[state] (d) [right of=c] {$d$};
		\node[state] (e) [right of=d] {$e$};
	\path
		(a)	edge	node[below]{\(1|3\)}	(b)
		(a)	edge[bend right,out=340,in=200]	node[left]{\(2|2\)}	(d)
		(a)	edge	node[below]{\(3|1\)}	(f)
		(b)	edge	node[below, very  near start]{\(3|1\)}	(d)
		(b)	edge[bend left,out=70,in=110]	node[below = -0.1cm]{\(\begin{array}{c}1|3\\ 2|2\end{array}\)}	(f)
		(c)	edge	node[above, very near start]{\(1|3\)}	(a)
		(c)	edge	node[left]{\(3|1\)}	(b)
		(c)	edge[bend right, out=335,in=205]	node{\(\begin{array}{c}2|2\\~\end{array}\)}	(e)
		(d)	edge[bend right, out=340,in=200]	node[right]{\(2|2\)}	(a)
		(d)	edge	node[above]{\(3|1\)}	(c)
		(d)	edge	node[above ]{\(1|3\)}	(e)
		(e)	edge	node[above, very near start]{\(3|1\)}	(a)
		(e)	edge[bend left,out=70,in=110]	node[above= -0.1cm]{\(\begin{array}{c}1|3\\ 2|2\end{array}\)}	(c)
		(f)	edge[bend right,out=335,in=205]	node{\(\begin{array}{c}\\2|1\end{array}\)}	(b)
		(f)	edge	node[below, very  near start]{\(1|3\)}	(d)
		(f)	edge	node[right]{\(3|2\)}	(e);
\end{tikzpicture}
\caption{A 3-letter 6-state invertible  reversible non-bireversible Mealy automaton. } %

\label{fig-jir36}
\end{figure}
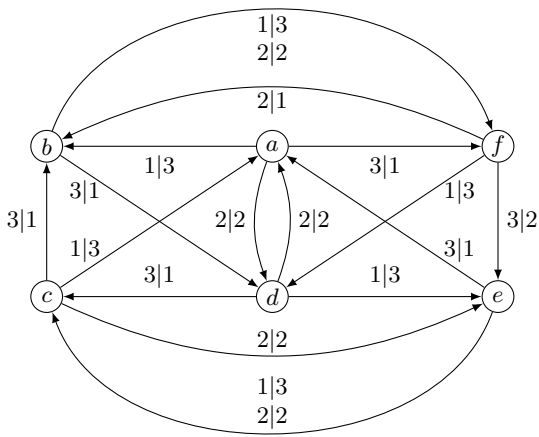

\medskip
In what follows,~\(\auta\) will denote, if not explicited, an invertible Mealy automaton~\(\auttuple\).

\subsection{Classes of Mealy automata}

We now describe several important classes of (invertible) Mealy automata.\smallskip

An automaton~\(\auta=\auttuple\) is called \emph{reversible} when the functions~\(\delta\)  are permutations of~\(\QQ\). If an automaton is invertible then its dual is reversible. A Mealy automaton is \emph{bireversible} if it is invertible and reversible, and so is its inverse. \smallskip

Another, somewhat opposite, restriction on the transition function leads to the class of \emph{reset} automata, studied \eg in~\cite{SiSt05,DeOl17}. An automaton~\(\auta\) is called~\emph{reset} is there exists a function~\(\phi: \XX \to \QQ\) such that~\(\forall x, \forall q, \delta_x(q) = \phi(x)\). In other words,  all the arrows labelled by an input letter~\(x\) lead to the same state~\(\phi(x)\). Up to renaming the states and pruning the automaton of its vertices without ingoing edges (which does not change the finiteness of the generated group nor the existence of element of infinite order), we may assume that all studied reset automata are~\emph{unfolded}, \ie that~\(\QQ=\XX\) and~\( \phi=\id \).
\smallskip

Another class of Mealy automata linked to the cycle structure is defined in~\cite{Sid00}, via the~\emph{activity}. Assume that there is a unique state inducing the identity in the group, denoted~\(\id\). The activity of an automorphism~\(t \in \gauta\) is defined as the function
\[\alpha_t : \ell \mapsto \size{\set{ x \in \XX^\ell, \: \sect{t}{x}\neq \id}} \:.\]
It is known that the activity~\(\alpha_t\) is polynomial if and only if there is not two nontrivial simple cycles accessible one from another in the automaton~\(\auta\), and that in this case the degree of the polynomial is the maximal number of nontrivial cycles that can be reached along of a simple cycle minus one. For a fixed alphabet~\(\XX\), we denote~\(\Pol\) (\resp~\(\Pol(d)\)) the set of all Mealy automata with polynomial activity (\resp with activity bounded by a polynomial of degree~\(d\)), and in particular we call \emph{bounded} (\resp \emph{finitary}) the set~\(\Pol(0)\) (\resp~\(\Pol(-1) = \set{ t,  \: \alpha_t(\ell) \to 0}\)). Notice that every finitary automaton generates a finite group \cite{Rus10}.\medskip

In~\cite{BBSZ13} is defined a tool to understand the orbits of elements of~\(\XX^*\) under the action of an automorphism described by a Mealy automaton. Given~\(\auta=\auttuple\),~\(t \in \gauta\) and~\(\mot{x} \in \XX^*\) define~\(\Orb_t(\mot{x})  = \min_{\alpha>0} \set{\act{t^\alpha}{\mot{x}}=\mot{x}}\) the size of the orbit of~\(\mot{x}\) under the action of~\(t\). The \emph{Orbit Signalizer} is the graph~\(\Gamma_t\) whose vertices  are the~\(\sect{t^{\Orb_t(\mot{x}) }}{\mot{x}}, \: \mot{x} \in \XX^*\)  and edges from \(\sect{t^{\Orb_t(\mot{x})}}{\mot{x}}\) to \(\sect{t^{\Orb_t(\mot{x}y)}}{\mot{x}y}, \mot{x} \in \XX^*, y \in \XX\) with label~\(\Orb_{\sect{t^{\Orb_t(\mot{x})}}{\mot{x}}}(y)\).\\
The Orbit signalizer is used in~\cite{BBSZ13,BGKP18} to solve the \OP. Indeed the order of~\(t\) is the lowest common multiple of all labels along paths starting from vertex~\(t\) in~\(\Gamma_t\). In particular if the orbit signalizer is finite then the \OP is decidable for~\(t\), as it reduces to checking if cycles have labels all~\(1\).


\section{Reversible Mealy automata}
We show that groups generated by (invertible) reversible Mealy automata have an element of infinite order with high probability. In fact we are going to prove a stronger result by showing that almost all invertible reversible automata are not bireversible, then using known results from~\cite{GKP15}, we obtain that the generated semigroups are almost surely  torsion-free.\smallskip

Since a Mealy automaton is completely defined by its transition and production functions, an invertible reversible Mealy automaton can be understood as~\(\size{\QQ}\) permutations in~\(\GSk{\size{\XX}}\) and \(\size{\XX}\) permutations in~\(\GSk{\size{\QQ}}\), thus the uniform distribution on the set of invertible reversible Mealy automata with stateset~\(\QQ\) and alphabet~\(\XX\) is the uniform distribution on~\(\GSk{\size{\QQ}}^{\size{\XX}}\times\GSk{\size{\XX}}^{\size{\QQ}}\).\smallskip

An invertible reversible automaton is bireversible if and only if each output letter induces a permutation of the stateset. In particular, for bireversible automata, we have that : \[\forall (p,i), (q,j) \in \QQ\times \XX, (p,i)\neq (q,j), \: \delta_i(p) = \delta_j(q) \Rightarrow \rho_p(i)\neq\rho_q(j)\:.\]

Define, for~\(r\in\QQ\), the set~\(\mathcal{O}_r= \set{j \in \XX, \exists (p,i) \in \QQ\times\XX,  p \xrightarrow{i|j} r \in \auta}\) of output letters that lead to~\(r\). An invertible reversible automaton is bireversible if and only if, for all states~\(r\), the set~\(\mathcal{O}_r\) is the whole alphabet.

\begin{example}
	Consider the  automaton Fig.~\ref{fig-jir36}. We have~\(\mathcal{O}_a = \set{1,2,3} =\mathcal{O}_c\) but~\(\mathcal{O}_b = \set{1,3}\), hence the automaton is not bireversible.
\end{example}

\begin{proposition}\label{prop:birev}
The probability that a random  invertible reversible automaton  with \(k\) letters and \(n\) states  is bireversible is less than \[ {1 \over n^{k-1}} + { 1\over k} \]
\end{proposition}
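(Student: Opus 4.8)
The plan is to reduce bireversibility to a statement about a single state and a single random map, and then estimate that probability by conditioning. By the characterisation recalled above, $\auta$ is bireversible iff $\mathcal{O}_r = \XX$ for \emph{every} state $r$, so its probability of being bireversible is bounded by the probability of the single event $\mathcal{O}_{r_0} = \XX$, for any fixed $r_0 \in \QQ$; this is the quantity I would bound. First I rewrite $\mathcal{O}_{r_0}$ using reversibility: for each input letter $i$ the only state $p$ with $\delta_i(p) = r_0$ is $p = \delta_i^{-1}(r_0)$, so $\mathcal{O}_{r_0} = \set{\rho_{\delta_i^{-1}(r_0)}(i) \mid i \in \XX}$. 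This is a set of at most $k = \size{\XX}$ letters, and it equals $\XX$ exactly when the map $f\colon \XX \to \XX$, $f(i) = \rho_{\delta_i^{-1}(r_0)}(i)$, is a bijection.

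Next I would condition on the transition data, or more economically on the tuple $\vec q = (q_i)_{i\in\XX}$ with $q_i := \delta_i^{-1}(r_0)$. Under the uniform law on $\GSk{n}^{\size{\XX}} \times \GSk{k}^{\size{\QQ}}$ the permutations $\delta_i$ are independent, uniform, and independent of the $\rho_q$, so $\vec q$ is uniform on $\QQ^{\XX}$ and independent of $\rho$. Grouping the input letters by the value of $q_i$, let $v_1, \dots, v_m$ be the distinct states occurring among the $q_i$ and $B_j = \set{i : q_i = v_j}$, $b_j = \size{B_j}$. Conditionally on $\vec q$ the permutations $\rho_{v_1}, \dots, \rho_{v_m}$ are still independent and uniform in $\GSk{k}$, the restriction of $f$ to $B_j$ is $i \mapsto \rho_{v_j}(i)$, and hence the images $\rho_{v_1}(B_1), \dots, \rho_{v_m}(B_m)$ are independent, $\rho_{v_j}(B_j)$ being a uniform $b_j$-subset of $\XX$; moreover $f$ is a bijection iff these images partition $\XX$. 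If $m = 1$ (all $q_i$ equal), then $f = \rho_{v_1}$ is automatically a bijection, but this case occurs only with probability $n \cdot n^{-k} = 1/n^{k-1}$. If $m \ge 2$, I bound the conditional probability by looking at the last block: a necessary condition for the images to partition $\XX$ is that $\rho_{v_m}(B_m)$ equal the complement of $\rho_{v_1}(B_1) \cup \dots \cup \rho_{v_{m-1}}(B_{m-1})$, which is a fixed set once the first $m-1$ permutations are revealed; since $\rho_{v_m}$ is independent of them and $\rho_{v_m}(B_m)$ is uniform over the $\binom{k}{b_m}$ subsets of that size, this probability is at most $1/\binom{k}{b_m}$, and $m \ge 2$ forces $1 \le b_m \le k-1$, whence $\binom{k}{b_m} \ge k$ and the conditional probability is at most $1/k$.

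Combining the two cases, the probability that $f$ is a bijection --- and a fortiori the probability that $\auta$ is bireversible --- is at most $n^{1-k}\cdot 1 + 1\cdot\frac{1}{k} = \frac{1}{n^{k-1}} + \frac{1}{k}$, which is the claimed bound (the case $k=1$ being trivial). I expect the only subtle point to be the $m \ge 2$ estimate: it relies on $\rho_{v_m}$ being genuinely independent of the permutations attached to the other blocks --- which is exactly why it pays to condition on $\vec q$ first and then isolate a single block --- and on the elementary inequality $\binom{k}{j} \ge k$ valid for $1 \le j \le k-1$. The remaining steps are routine manipulations of the uniform measure on $\GSk{n}^{\size{\XX}} \times \GSk{k}^{\size{\QQ}}$. (One could equally use the dual characterisation "every output letter induces a permutation of $\QQ$", which by the symmetric argument yields $1/k^{n-1}+1/n$; the stated bound is the version one gets from the $\mathcal{O}_r$ formulation.)
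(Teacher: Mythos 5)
Your proof is correct and follows essentially the same route as the paper: bound the bireversibility probability by the probability that a single fixed state $r_0$ satisfies $\mathcal{O}_{r_0}=\XX$, handle the case of a unique predecessor separately (probability $1/n^{k-1}$), and otherwise use the independence of one production function from the rest to bound the chance that its image of a block of size $1\le b\le k-1$ hits exactly the required complement, giving $1/\binom{k}{b}\le 1/k$. Your version merely makes the conditioning on the predecessor data and the block decomposition explicit, which the paper leaves informal.
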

\begin{proof}
Let~\(\auta = \auttuple\). For~\(r \in \QQ\), we denote~\(\pred_r=\size{\set{p \in \QQ \mid \:\exists i \in \XX, \delta_i(p)=r}}\) the size of the set of predecessor of~\(r\). 
Let us fix a state \(r\). We have:
\begin{align*}
\Pr(\auta \in \text{BIR}) &= \Pr(\forall q \in \QQ, \mathcal{O}_q=\XX) \\
 &\leq \Pr(\mathcal{O}_r = \XX)\:.\\
\intertext{From the law of total probability we get:}
\Pr(\auta \in \text{BIR})&\leq \Pr(\mathcal{O}_r = \XX \mid \pred_r = 1 )\Pr( \pred_r = 1 )  \quad \\
& \qquad \quad + \quad \Pr(\mathcal{O}_r = \XX \mid  \pred_r \geq 2 )\Pr(  \pred_r \geq 2) \\
&\leq \Pr( \pred_r = 1 ) + \Pr(\mathcal{O}_r = \XX \mid \pred_r \geq 2)
\intertext{ The probability that  \(r\) has exactly one predecessor can be seen as fixing  \(\delta_i^{-1}(r)\) for some reference letter \(i \in \XX\) and requiring that the \(k-1\) other \(\delta_j^{-1}(r), \: j \in \XX \setminus \lbrace i\rbrace\) are equal to   \(\delta_i^{-1}(r)\), hence: }
\Pr(\auta \in \text{BIR})&\leq {1 \over n^{k-1}} + \Pr(\mathcal{O}_r = \XX \mid  \pred_r  \geq 2)
\intertext{For the second term, let us consider a predecessor \(p\) of \(r\) and let \( \lambda\) be the number of input letters leading from \(p\) to \(r\). We have \(1\leq\lambda \leq {k-1}\). To enforce bireversibility, we have to  avoid that~\(p\) outputs a letter that is already leading to~\(r\) (\(\rho_p(i) \neq \rho_q(j)\) for~\(\delta_p(i)=\delta_q(j)=r\)). Assume that  the set \(\mathcal{O}_r^{(p)}= \set{j \in \XX  \mid \: \exists q \neq p \in \QQ,  q \xrightarrow{i|j} r \in \auta}\) of output letters leading to~\(r\) from a state~\(q\) different from~\(p\) is of maximal size~\(k-\lambda\). Since~\(\rho_p\) is random and independent from the others~\(\rho_q\) we can bound this probability from above: having the letters leading from~\(p\) to~\(r\) produce the~\(\lambda\) out of~\(k\) required letters is~\({k \choose \lambda}^{-1}\). Hence:}
\Pr(\auta \in \text{BIR})&\leq {1 \over n^{k-1}} +  {k \choose \lambda}^{-1}\\
&\leq {1 \over n^{k-1}} +  {1 \over k} \:.
\end{align*}
\vspace{-.3cm}
\qed
\end{proof}

The obtained bound goes to~\(0\) as~\(k\) goes to infinity, and the same reasoning applied to the dual automaton allows us to conclude that the proportion of bireversible automata among the invertible reversible automata goes to~\(0\) as the number of letters or states grows.

It is proven in~\cite{GKP15}
that an invertible reversible Mealy automaton without bireversible component generates a torsion-free semigroup. Whence our theorem:
\begin{theorem}\label{thm:orderbirev} 
The probability that an invertible reversible Mealy automaton taken uniformly at random generates a  torsion-free semigroup goes to~\(1\) as the size of  the alphabet grows. Moreover, the probability for the group to have an element of infinite order also goes to~\(1\)  as  the stateset or  the alphabet grows.
\end{theorem}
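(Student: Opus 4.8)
The plan is to combine Proposition~\ref{prop:birev} (written, as there, for an automaton with $k$ letters and $n$ states), its dual, and the theorem of~\cite{GKP15} recalled above; the only genuine work is to upgrade ``$\auta$ is not bireversible'' to ``$\auta$ has no bireversible component'', which is the hypothesis that~\cite{GKP15} actually requires.

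First I would reduce torsion-freeness of $\presp{\auta}$ to the absence of a bireversible component, using
\[
\Pr\bigl(\presp{\auta}\text{ torsion-free}\bigr)\ \geq\ 1-\Pr\bigl(\auta\text{ has a bireversible component}\bigr).
\]
Since $\auta$ is reversible, each transition function is a permutation of $\QQ$, so the connected components of $\auta$ are exactly the orbits of the subgroup of $\GSk{n}$ generated by the $\delta_x$; hence a bireversible component is either the whole automaton (being bireversible) or a proper $\delta$-invariant subset $S\subsetneq\QQ$ whose induced subautomaton is bireversible. I would bound these two probabilities separately. The first is exactly Proposition~\ref{prop:birev}, namely $\leq n^{-(k-1)}+k^{-1}$. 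For the second, a proper $S$ with $\size S=m$ being $\delta$-invariant already forces $\delta_x(S)=S$ for each of the $k$ letters, an event of probability $\binom{n}{m}^{-1}$ for a uniform permutation and independent over $x$; a union bound over subsets of each size and over $1\leq m\leq n-1$ yields $\sum_{m=1}^{n-1}\binom{n}{m}^{1-k}$, which goes to $0$ both as $k\to\infty$ ($n$ fixed) and as $n\to\infty$ ($k\geq2$ fixed). Adding the two bounds gives the first assertion (alphabet growing).

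For the second assertion, I would first note that $\presp{\auta}\subseteq\gauta$, so once $\presp{\auta}$ is torsion-free any non-trivial $\rho_q$ is a non-identity element of infinite order in $\gauta$ (a finite order would make its powers a finite set); and all $\rho_q$ are trivial with probability only $(k!)^{-n}\to0$, so the alphabet-growing case follows from the first part. When instead the stateset grows with $k\geq2$ fixed, Proposition~\ref{prop:birev} alone is too weak (because of its $k^{-1}$ term), so I would apply it to the dual $\dz\auta$, which has $k$ states and $n$ letters and is bireversible exactly when $\auta$ is, bireversibility being preserved by duality; this gives $\Pr(\auta\text{ bireversible})\leq k^{-(n-1)}+n^{-1}\to0$. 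Together with the proper-component estimate above, this again forces $\Pr(\auta\text{ has a bireversible component})\to0$, hence torsion-freeness and \emph{a fortiori} an element of infinite order, as the stateset grows.

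The step I expect to be the crux is exactly this treatment of \emph{components}: one must rule out not merely that the whole random automaton is bireversible, but also that it splits, with some piece --- possibly a degenerate one, \eg a single $\delta$-fixed state carrying a nontrivial output permutation --- being bireversible. This is what forces the two growth regimes to be handled by different tools (Proposition~\ref{prop:birev} directly for the alphabet, via the dual for the stateset), and it is also the reason one tacitly discards the one-letter alphabet, where every $\rho_q$ is trivial and $\gauta$ is trivial.
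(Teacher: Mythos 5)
Your proof is correct, and its skeleton is the same as the paper's: reduce torsion-freeness of \(\presp{\auta}\) to the absence of a bireversible component via the theorem of~\cite{GKP15}, kill bireversibility of the whole automaton with Proposition~\ref{prop:birev} (and, when the stateset grows, with the same proposition applied to \(\dz\auta\), since bireversibility is self-dual and duality swaps the two factors of \(\GSk{\size{\QQ}}^{\size{\XX}}\times\GSk{\size{\XX}}^{\size{\QQ}}\)), and then rule out proper components. Where you genuinely differ is on that last point: the paper invokes Dixon's theorem (two random permutations of a large set generate a transitive group with high probability) to assert that a typical automaton is connected, hence has a single component, whereas you bound the probability of a proper nonempty \(\delta\)-invariant subset directly by the union bound \(\sum_{m=1}^{n-1}\binom{n}{m}^{1-k}\). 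Your variant is more elementary and arguably tighter as a justification: Dixon's theorem literally concerns the regime where the permuted set grows, i.e.\ \(n\to\infty\) with \(k\geq 2\), and does not by itself cover the first assertion of the theorem, where \(k\to\infty\) with \(n\) fixed and what is needed is that \(k\) independent permutations of a fixed \(n\)-set act transitively; your single estimate tends to \(0\) in both regimes (for \(n,k\geq 2\)) and so treats the two halves of the statement uniformly. You also spell out two steps the paper compresses into one sentence: that torsion-freeness of \(\presp{\auta}\subseteq\gauta\) yields an element of infinite order in the group via a nontrivial generator, and that the \(1/k\) term in Proposition~\ref{prop:birev} is what forces the detour through the dual in the stateset-growing regime. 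The boundary cases \(n=1\) and \(k=1\), which you flag as excluded, are indeed the only ones where the statement degenerates.
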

\begin{proof}
	It is known that, with great probability, two random permutations on a large set generate a transitive group~\cite{Dix69}, in terms of a graph, it means that a typical reversible Mealy automaton on a large alphabet is (strongly) connected, and is not bireversible by Proposition~\ref{prop:birev} whence the first part of the result. The second part comes from the remark on the dual. \qed
\end{proof}

From~\cite{FrMi18}, where it is shown that having an element of infinite order implies exponential growth among groups generated by invertible reversible automata, we obtain:
\begin{theorem}\label{thm:reversiblegrowth}
The probability that an invertible reversible Mealy automaton taken uniformly at random generates a group with exponential growth goes to~\(1\) as the size of the stateset or of the alphabet grows.
\end{theorem}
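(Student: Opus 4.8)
The plan is to obtain this statement as an immediate consequence of Theorem~\ref{thm:orderbirev} together with the growth result of~\cite{FrMi18}. First I would note that the sampling procedure (a uniformly random pair in~\(\GSk{\size{\QQ}}^{\size{\XX}}\times\GSk{\size{\XX}}^{\size{\QQ}}\)) always produces an automaton lying in the class of invertible reversible automata, so every automaton in the sample space satisfies the hypotheses needed to apply the cited results; the only randomness concerns the algebraic behaviour of~\(\gauta\), not membership in the class.

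Next I would invoke Theorem~\ref{thm:orderbirev}: the probability that a uniformly random invertible reversible Mealy automaton with \(n\) states and \(k\) letters generates a group having an element of infinite order tends to~\(1\) as \(n\to\infty\) or \(k\to\infty\). Combining this with~\cite{FrMi18}, where it is shown that a group generated by an invertible reversible automaton and containing an element of infinite order must have exponential growth, one obtains the pointwise inclusion of events, valid on the sample space of invertible reversible automata,
\[
\set{\gauta \text{ has an element of infinite order}} \ \subseteq\ \set{\gauta \text{ has exponential growth}}\:.
\]
Hence the probability of the right-hand event is at least that of the left-hand one, and therefore it too tends to~\(1\) as the stateset or the alphabet grows.

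The final step is routine, so I do not expect a serious obstacle here: all the genuine work has already been carried out in Proposition~\ref{prop:birev}, in the transitivity estimate of~\cite{Dix69}, and in the torsion-freeness criterion of~\cite{GKP15} used to establish Theorem~\ref{thm:orderbirev}. The only point deserving explicit mention is the direction in which the dichotomy of~\cite{FrMi18} is used — namely the implication ``infinite order \(\Rightarrow\) exponential growth'', which holds for every invertible reversible automaton without any additional genericity assumption — so no argument beyond Theorem~\ref{thm:orderbirev} is required.
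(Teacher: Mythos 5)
Your proposal is correct and matches the paper's own (implicit) argument exactly: the paper derives Theorem~\ref{thm:reversiblegrowth} in one line by combining Theorem~\ref{thm:orderbirev} with the result of~\cite{FrMi18} that an element of infinite order forces exponential growth for groups generated by invertible reversible automata. Your explicit event-inclusion and monotonicity step is just a careful spelling-out of that same deduction.
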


\begin{remark}\label{rem:growthinfinite}
	Notice that Theorem~\ref{thm:orderbirev} is not \emph{a priori} a consequence of Theorem~\ref{thm:reversiblegrowth}: there exists infinite Burnside group with exponential growth (\cite{Adi79}). However, no example of such a group is known within the class of automaton groups.
\end{remark}

It is worthwhile noting that it is unknown whether the \OP is decidable within the class of (semi)groups generated by reversible Mealy automata.


\section{Reset Mealy automata}
The class of reset automata is of particular interest since it is linked to \emph{one-way cellular automata}, and  were used by Gillibert to prove the undecidability of the \OP for automata semigroups~\cite{Gil14}. For groups generated by (invertible) reset Mealy automata the \OP remains open~\cite{DeOl17}. \smallskip

As the transition function is trivial in a (unfolded) reset Mealy automaton, the uniform distribution on the set of unfolded invertible reset Mealy automata with stateset~\(\QQ\) and alphabet~\(\XX\) is the uniform distribution on~\(\GSk{\size{\XX}}^{\size{\QQ}}\).\smallskip

We are going to use a result from~\cite{Pic17}: let~\(\auta=\left( \QQ,\delta,\rho \right)\) be a (unfolded) reset automaton and define the transformation~\(\pi_\auta : q \mapsto \rho_q^{-1}(q)\) for all~\(q \in \QQ\).
\begin{theorem}[{\cite[Theorem 1.20]{Pic17}}]\label{thm:Pic}
	Let~\(\auta\) be a reset automaton. If~\(\pi_\auta\) is not a permutation then the group generated by~\(\auta\) has an element of infinite order.
\end{theorem}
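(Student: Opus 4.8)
The plan is to work directly with the self-similar action, exploiting that in an \emph{unfolded} reset automaton every transition resets the state to the letter just read, i.e. \(\delta_x(q)=x\). First I would unfold the recursion \(\rho_q(i\mot s)=\rho_q(i)\rho_i(\mot s)\) into the explicit formula
\[\act{q}{s_1\cdots s_n}=\rho_q(s_1)\,\rho_{s_1}(s_2)\,\rho_{s_2}(s_3)\cdots\rho_{s_{n-1}}(s_n),\]
and record that the section \(\sect{q}{s_1\cdots s_k}\) equals the state \(s_k\), so it depends only on the last letter read. These two identities reduce every computation in \(\gauta\) to bookkeeping on words over \(\XX=\QQ\).

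Next I would reformulate the hypothesis. Since \(\QQ\) is finite, \(\pi_{\auta}\) failing to be a permutation means it is neither injective nor surjective: there is a letter \(c\notin\mathrm{Im}(\pi_{\auta})\), equivalently \(\rho_q(c)\neq q\) for every \(q\), and there are distinct states \(a\neq b\) with \(\pi_{\auta}(a)=\pi_{\auta}(b)\). The role of \(\pi_{\auta}\) is revealed by the following key computation: if a word \(\mot w=w_1\cdots w_n\) follows a forward orbit of \(\pi_{\auta}\), that is \(w_{i+1}=\pi_{\auta}(w_i)\), then \(\rho_{w_i}(w_{i+1})=\rho_{w_i}(\rho_{w_i}^{-1}(w_i))=w_i\), and the formula above collapses to a shift register
\[\act{q}{w_1\cdots w_n}=\rho_q(w_1)\,w_1 w_2\cdots w_{n-1}.\]
Thus along \(\pi_{\auta}\)-orbits the automaton acts by prepending \(\rho_q(w_1)\) and shifting, which is exactly the carry dynamics of an adding machine.

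The strategy is then to exhibit unbounded orbits, which suffices: if \(t\) had finite order \(N\) then every orbit of \(t\) on \(\XX^n\) would have size dividing \(N\), so it is enough to produce words with arbitrarily large \(t\)-orbits (equivalently, a reachable loop carrying a label \(\ge 2\) in the orbit signalizer \(\Gamma_t\)). I would seed the construction with the letter \(c\notin\mathrm{Im}(\pi_{\auta})\): its forward \(\pi_{\auta}\)-orbit \(c,\pi_{\auta}(c),\pi_{\auta}^2(c),\dots\) is transient before falling into a cycle, and since \(c\) lies in no image it can occupy only the first position of the truncations \(\mot w_n=c\,\pi_{\auta}(c)\cdots\pi_{\auta}^{\,n-1}(c)\). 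Feeding \(\mot w_n\) into the state \(c\) and iterating, the shift register pushes the ``defect'' \(c\) to the right while \(\rho_c(c)\neq c\) injects a new leading symbol, and the merge supplied by non-injectivity of \(\pi_{\auta}\) is what prevents the orbit from closing after a bounded number of steps.

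The hard part will be making this growth explicit: after the first application the prepended head no longer follows a \(\pi_{\auta}\)-orbit, so the clean shift-register form is lost and one must track the position and multiplicities of the propagating ``carry'' by a monovariant — or, equivalently, locate a nontrivial recurrent label in \(\Gamma_c\). I expect the cleanest write-up to isolate this as a lemma asserting \(\size{\Orb_c(\mot w_n)}\to\infty\), from which Theorem~\ref{thm:Pic} follows immediately.
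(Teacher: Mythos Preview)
Your setup and the shift-register identity are exactly what the paper uses, and you are one observation away from a complete proof. The ``hard part'' you anticipate does not exist: although the prepended head becomes arbitrary after repeated applications, the \emph{suffix} remains a clean shift. Concretely, if $\mot w_n = c\,\pi_\auta(c)\cdots\pi_\auta^{\,n-1}(c)$ and $q$ is any state, one application gives $\rho_q(c)\,c\,\pi_\auta(c)\cdots\pi_\auta^{\,n-2}(c)$; applying $q$ again, the first two letters are uncontrolled, but positions $3,4,\ldots$ read $c\,\pi_\auta(c)\cdots$, because in a reset automaton the section after two letters is simply the second input letter, here $c$, and from there your shift-register identity applies verbatim. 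By induction, after $i$ applications the letter at position $i{+}1$ is $c$. In the original $\mot w_n$ that position carries $\pi_\auta^{\,i}(c)$, and this cannot equal $c$ for any $i\ge 1$ since $c$ is not $\pi_\auta$-periodic. Hence $\act{q^i}{\mot w_n}\neq\mot w_n$ for $1\le i\le n-1$, so $\size{\Orb_q(\mot w_n)}\ge n$ and every state $q$ has infinite order. No monovariant or orbit-signalizer analysis is needed.

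The paper runs exactly this argument, merely choosing the non-periodic letter $x_0$ to sit just before the eventual $\pi_\auta$-cycle $x_1\to\cdots\to x_\ell\to x_1$, so that the test word is the tidy $x_0(x_1\cdots x_\ell)^\alpha$; the periodicity of the tail is cosmetic and your choice $c\notin\operatorname{Im}(\pi_\auta)$ works identically. One remark: the non-injectivity of $\pi_\auta$ (your collision $a\neq b$ with $\pi_\auta(a)=\pi_\auta(b)$) plays no role. Only non-surjectivity---equivalently, the existence of a letter outside every $\pi_\auta$-cycle---is used, so you can drop that thread entirely.
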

We give the proof for the sake of completeness.
\begin{proof}
	If~\(\pi_\auta\) is not a permutation, then there exists~\(x_0\) which does not belong to any cycle of \(\pi_\auta\) and such that~\(\pi_\auta(x_0)=x_1\) belongs to a cycle~\(x_1 \xrightarrow{} \cdots \xrightarrow{} x_\ell   \xrightarrow{} x_1 \) of~\(\pi_\auta\). Computing the orbit of~\(x_0(x_1\cdots x_\ell)^\alpha\) under the action of any given state~\(q \in \QQ\)  gives:
	\[\begin{array}{cccccccccccccc}
	&x_0	&			& x_1 &	&x_2	& &  \cdots	& &  x_\ell & & x_1& \cdots\\	
	
q	&\lacroix& x_0&\lacroix& x_1	&\lacroix  & x_2 & \cdots & x_{\ell-1} &\lacroix & x_\ell&\lacroix & x_1\\
	
	&y_1	&		 & x_0&	&x_1	&		 & \cdots& & x_{\ell -1}& & x_\ell&\\	
	
q&\lacroix&  y_1&\lacroix& x_0&\lacroix& &  \cdots&   x_{\ell-2} &\lacroix & x_{\ell-1}&\lacroix & x_\ell\\
	
	&y'_1&			 & y'_2	&  & x_0 &		&\cdots & 	 & x_{\ell-2} & & x_{\ell-1}\\
	\vdots&\vdots&			 & \vdots	&  & \vdots &		&& 	 &\vdots & & \vdots\\
\end{array}
\]
Since~\(x_1=\rho_{x_0}^{-1}(x_0) = \rho_{x_\ell}^{-1}(x_\ell)\). So~\(\act{q^{\alpha i}}{x_0(x_1\cdots x_\ell)^\alpha} = \mot{u}{x_0(x_1\cdots x_\ell)^{(\alpha-i)}}\), for some~\(\mot{u} \in \QQ^{i\alpha}\), hence~\(q\) has infinite order.
\qed
\end{proof}

\begin{theorem}\label{thm:reset}
The probability that a random (unfolded) reset automaton on~\(k\)	letters has an element of infinite order is at least~\(1-e\sqrt{k}e^{-k}\).
\end{theorem}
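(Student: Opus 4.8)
The plan is to combine Theorem~\ref{thm:Pic} with a counting argument: by Theorem~\ref{thm:Pic}, it suffices to show that for a uniformly random unfolded reset automaton on~\(k\) letters the map~\(\pi_\auta : q \mapsto \rho_q^{-1}(q)\) fails to be a permutation with probability at least~\(1-e\sqrt{k}e^{-k}\). So the real content is estimating the probability that~\(\pi_\auta\) \emph{is} a permutation and bounding it above by~\(e\sqrt{k}e^{-k}\).

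First I would set up the probabilistic model precisely: an unfolded reset automaton on~\(k\) letters is just a tuple~\((\rho_x)_{x\in\XX}\) of~\(k\) independent uniformly random permutations in~\(\GSk{k}\), and~\(\pi_\auta(x) = \rho_x^{-1}(x)\). The key observation is that for each fixed~\(x\), as~\(\rho_x\) ranges uniformly over~\(\GSk{k}\), the value~\(\rho_x^{-1}(x)\) is uniformly distributed over~\(\XX\) (since for any target~\(y\) the number of permutations with~\(\rho_x^{-1}(x)=y\), equivalently~\(\rho_x(y)=x\), is~\((k-1)!\)); moreover the values~\(\pi_\auta(x)\) for distinct~\(x\) are \emph{independent} because they depend on disjoint blocks of the randomness~\((\rho_x)_{x\in\XX}\). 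Hence~\(\pi_\auta\) is distributed exactly as a uniformly random function~\(\XX\to\XX\), i.e.\ a uniform element of the set of all~\(k^k\) functions. Then~\(\Pr(\pi_\auta\text{ is a permutation}) = k!/k^k\), and by the standard Stirling estimate~\(k! \le e\sqrt{k}\,(k/e)^k\) this is at most~\(e\sqrt{k}\,e^{-k}\), which gives the claimed bound~\(1-e\sqrt{k}e^{-k}\) for the complementary event, and Theorem~\ref{thm:Pic} finishes the argument.

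The step I expect to require the most care is the independence claim: one must check that~\(\pi_\auta(x)=\rho_x^{-1}(x)\) genuinely depends only on~\(\rho_x\) and that the~\(\rho_x\) are mutually independent in the chosen uniform model on~\(\GSk{k}^{k}\) (this is exactly the description of the uniform distribution given in the paper just before the theorem), so that the joint law of~\((\pi_\auta(x))_{x\in\XX}\) is a product of uniform laws on~\(\XX\). Once that is granted, the reduction "random function, not a permutation with probability~\(1-k!/k^k\)" is immediate, and the only remaining point is the elementary inequality~\(k!/k^k \le e\sqrt{k}\,e^{-k}\), which is just the upper Stirling bound~\(k!\le e\sqrt{k}(k/e)^k\) (valid for all~\(k\ge 1\)) and needs no real work. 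I would also remark that the bound is only non-trivial for~\(k\) large enough that~\(e\sqrt{k}e^{-k}<1\), and that it tends to~\(0\) as~\(k\to\infty\), so the conclusion is that almost all reset automata on a large alphabet generate a group with an element of infinite order.
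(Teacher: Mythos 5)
Your proposal is correct and follows exactly the paper's argument: reduce via Theorem~\ref{thm:Pic} to bounding the probability that~\(\pi_\auta\) is a permutation, observe that~\(\pi_\auta\) is a uniformly random map~\(\XX\to\XX\) so this probability is~\(k!/k^k\), and apply the Stirling upper bound~\(k!\le e\sqrt{k}(k/e)^k\). You simply spell out the independence and uniformity of the values~\(\rho_x^{-1}(x)\), which the paper leaves implicit.
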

\begin{proof}
	Since the~\(\rho_q ,\; q\in \XX\) are random permutations, the function~\(q \mapsto \rho_q^{-1}(q)\) can be considered has a random mapping from~\(\XX\) to~\(\XX\), and the number of permutations among mappings is~\({k! }\over{k^k}\). We conclude using Stirling's approximation and known bounds.
\qed
\end{proof}
Using~\cite{Olu17}, where Olukoya proves that groups generated by reset automata are either finite or have exponential growth, we get (see also Remark~ \ref{rem:growthinfinite}):
\begin{theorem}\label{thm:resetgrowth}
The probability that a random (unfolded) reset automaton on~\(k\)	letters has exponential growth is at least~\(1-e\sqrt{k}e^{-k}\).
\end{theorem}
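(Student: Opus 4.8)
The plan is to chain together the two ingredients the excerpt has just assembled: Theorem~\ref{thm:reset}, which gives a lower bound of \(1-e\sqrt{k}e^{-k}\) on the probability that a random unfolded reset automaton on \(k\) letters has an element of infinite order, and the dichotomy of Olukoya~\cite{Olu17}, which says that a group generated by a reset automaton is either finite or of exponential growth. The point is that an element of infinite order forces the whole group to be infinite, so in particular the group cannot be finite, and then the Olukoya dichotomy leaves only the exponential-growth alternative.

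First I would recall, exactly as noted in Section~2 of the excerpt, that the order of \(g\) equals the cardinality of \(\pres{g}\), so the existence of an element of infinite order implies \(\size{\gauta}=\infty\). Second, I would invoke Olukoya's theorem: for \(\auta\) a reset automaton, \(\gauta\) is finite or has exponential growth; since the former is excluded, \(\gauta\) has exponential growth. Third — and this is the only genuinely quantitative step — I would observe that these two implications are pointwise (they hold for every reset automaton having an element of infinite order, not merely generically), so the event ``\(\auta\) has exponential growth'' contains the event ``\(\auta\) has an element of infinite order''; hence its probability is at least that of the latter, which is \(\geq 1-e\sqrt{k}e^{-k}\) by Theorem~\ref{thm:reset}. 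There is essentially nothing left to compute: the Stirling estimate was already absorbed into Theorem~\ref{thm:reset}.

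The only thing to be careful about is that the two cited black boxes (Pic17's criterion, used inside Theorem~\ref{thm:reset}, and Olukoya's dichotomy) apply to the same class of automata and under the same ``unfolded'' normalization adopted in this section; since the excerpt already reduces to unfolded reset automata without loss of generality for both finiteness and the existence of infinite-order elements, and growth type is likewise insensitive to pruning states with no incoming edges, this matching is immediate. So I do not expect any real obstacle — the statement is a one-line corollary of Theorem~\ref{thm:reset} together with~\cite{Olu17}, and the proof should be just the two sentences ``an element of infinite order makes the group infinite, hence by~\cite{Olu17} it has exponential growth; now apply Theorem~\ref{thm:reset}.'' If anything, the subtle point worth flagging (as Remark~\ref{rem:growthinfinite} already does for the reversible case) is that exponential growth does \emph{not} conversely imply the existence of an infinite-order element, so Theorem~\ref{thm:resetgrowth} is logically weaker than, and derived from, Theorem~\ref{thm:reset} rather than the other way around.
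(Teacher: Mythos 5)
Your proposal is correct and is exactly the argument the paper intends: the theorem is stated as an immediate consequence of Theorem~\ref{thm:reset} combined with Olukoya's finite-or-exponential-growth dichotomy, using that an element of infinite order rules out finiteness and that the implication is pointwise, so the probability bound transfers. The paper gives no further proof, so there is nothing to compare beyond this.
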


From Delacourt and Ollinger~\cite[Proposition 1]{DeOl17}, our result also means that permutive one way cellular automata are generically aperiodic.

\begin{remark}
An unfolded reset automaton is minimal (\cite{AKLMP12}) if and only if each states induces a different permutation on letters. By the birthday problem, we can extend our result a bit: a random minimal unfolded reset automaton generically generates a group with exponential growth and elements of infinite order.
\end{remark}

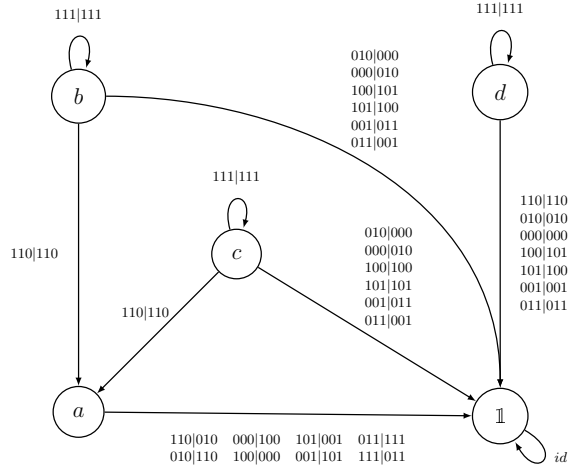
\begin{figure}[h!]

\begin{center}

{
\scalebox{.6}{
{
	\begin{tikzpicture}[->,>=latex,node distance=35mm,thick, inner sep =8pt]

		\node[state, inner sep =8pt] (c)  {\Large $c$};
				\node[state, inner sep =8pt] (a) [left of=c,below of=c] {\Large $a$};
				\node[state, inner sep =8pt] (b) [left of=c, above of=c] {\Large $b$};
				\node[state, inner sep =8pt] (d) [above right= 27.5mm and 50mm of c ] {\Large $d$};
				\node[state, inner sep =8pt] (e) [below right= 27.5mm and 50mm of c ] {\Large $\id$};
			\path
				(d) edge[loop above] node[above]{\(\begin{array}{c}  111|111 \end{array}\)}	(d)
				(c) edge[loop above] node[above]{\(\begin{array}{c}  111|111 \end{array}\)}	(c)				
				(b) edge[loop above] node[above]{\(\begin{array}{c}  111|111 \end{array}\)}	(b)				
				(a)	edge	node[below=.1cm]{\(\begin{array}{c} 110|010 \quad 000|100  \quad  101|001 \quad 011|111\\010|110 \quad 100|000  \quad  001|101 \quad 111|011 \end{array} \)}	(e)
				(b) edge 	node[left=.1cm]{\(\begin{array}{c} 110|110 \\ \end{array}\)} (a)
				(c) edge 	node[above=.2cm, pos=0.6]{\(\begin{array}{c} 110|110 \end{array}\)} (a)
				(b) edge[out=00,in=90] 	node[above=.1cm]{\(\begin{array}{c} 010|000 \\000|010 \\ 100|101 \\ 101|100\\ 001|011 \\011|001 \end{array}\)} (e)
				(c) edge 	node[above=.1cm, pos=0.6]{\(\begin{array}{c}  010|000 \\000|010 \\ 100|100 \\ 101|101\\ 001|011  \\011|001\end{array}\)} (e)
				(d) edge 	node[right=.1cm]{\(\begin{array}{c}110|110 \\ 010|010 \\000|000 \\ 100|101 \\ 101|100\\ 001|001 \\011|011\end{array}\)} (e)
				(e)  edge[loop below right] node[below right =-5mm and -1mm]{\(\begin{array}{c} id \end{array}\)}	(e)
		
		;
	\end{tikzpicture}
	}
	}	
}
\end{center}

\caption{The normal form~\(\wop{3}{\aut{G}}\) of the (invertible) automaton~\(\aut{G}\) of bounded activity generating the Grigorchuk group. A simple path contains at most one nontrivial loop.}\label{fig:Gri3}
\end{figure}

\section{Mealy automata with polynomial activity}

The class of Mealy automata with polynomial activity is interesting as the \OP is decidable for (semi)groups generated by automata with bounded activity but remains open for the higher levels of the hierarchy~\cite{BBSZ13}.\\
Recall that~\(\id\) denotes the identity state in the automaton, which is supposed to be unique.
We are going to define a \emph{normal form}: let~\(\auta = \auttuple\) be an automaton with polynomial activity and let~\(\ell\) be the lowest common multiple of the sizes of the (simple) cycles. Since an automaton with polynomial growth has no entangled cycles, we have that~\(\wopa\) has all cycles of length one. Now put~\(d\) the maximal length of an (oriented) path between a state and a self-loop in~\(\wopa\). Then the normal form of the automaton~\(\auta\) is the automaton~\(\wop{d\ell\!\!}{\auta}\) and it looks as follows: it is a directed acyclic graph whose leaf induces the identity~\(\id\), and where each state either has a self-loop or leads to a state with a self-loop. For instance the normal form~\(\wop{3}{\aut{G}}\) of the (invertible) Grigorchuk automaton~\(\aut{G}\)  (which has bounded activity) Fig.~\ref{fig:Mealy} (left) is depicted Fig.~\ref{fig:Gri3}\smallskip

To the best of our knowledge, and even among automata under normal form, there is no easy description of the uniform distribution on the set of (invertible) Mealy automata with polynomial activity, even if one fixes the degree of the activity. To bypass this difficulty, we show that automata with finitary activity are rare even among  automata with bounded activity, and use the fact that once the transition functions are fixed, the choice of production functions does not change the activity.\smallskip

The next proposition is a simple yet useful observation:
\begin{figure}[h!]

\begin{center}
\tikzset{node style ge/.style={circle, inner sep =0mm}}

\begin{tikzpicture}[baseline=(A.center),scale=.8]

  \tikzset{BarreStyle/.style =   {opacity=.4,line width=5 mm,line cap=round,color=#1}}
    \tikzset{SignePlus/.style =   {above left,,opacity=1,circle,fill=#1!50}}
    \tikzset{SigneMoins/.style =   {below left,,opacity=1,circle,fill=#1!50}}
\matrix (A) [matrix of math nodes, nodes = {node style ge},row sep=0mm,column sep=0 mm] 
{
{} &1	&		{}	& 1 & {}	&1	&		{}	&  &{} &  & {}\\		
t	&{\lacroix}&t &{\lacroix}& t	&{\lacroix}& t&{}& {}\\	
{}	&2	&	{}	 &{}&{}	&{}	&	{}	 & & {}& & {}\\		
t	&{\lacroix}&  r_i&{2}&  	&{ }& {}&{ }& {}\\	
\vdots	& {}&{} &{\lacroix}	&  {\id}& {2}&		{}	 & {}& & {} & {}\\
t	&{\lacroix}& r_j&{} & 	{}&{\lacroix}& {\id}&{}& {}\\	
{}&	1 &	{}	& k&{} &	{} &		{}& {} \\
t^{\alpha_1}	&{\lacroix}& ts_1&{\lacroix}& r_k	&{}& {}&{}& {}& {} & {}\\	
{} &{}	&		{}	& . & {}	&{}	&		{}	& {} &{} & \\		
\vdots	&{}& {}&{}& {}	&{}& {}&{}& {}& {} & {}\\	
{} &1	&		{}	& \ell & {}	&{}	&		{}	& {} &{} & \\		
t^{\alpha_1}		&{\lacroix}& ts_1&{\lacroix}& r_\ell	&{}& {}&{}& {}& {} & {}\\	
{} &1	&		{}	& 1 & {}	&m	&		{}	&  &{} & \\			
t^{\alpha_2}		&{\lacroix}& (ts_1)^{\beta_1}&{\lacroix}& \:ts_2\:\:	&{\lacroix}&r_m&{}& {}&  & {}\\	
{} &1	&		{}	& 1 & {}	&.	&		{}	&  &{} & \\		
\vdots	&{}& &{}& &{}& &{}& {}& & {}\\	
{} &1	&		{}	& 1 & {}	&n	&		{}	&  &{} & \\			
t^{\alpha_2}	&{\lacroix}&  (ts_1)^{\beta_1}&{\lacroix}& ts_2	&{\lacroix}& r_n&{}& {}&  & {}\\	
{} &1	&		{}	& 1 & {}	&1	&		{}	&  &{} & \\		
};

\draw[gray,decorate,decoration={brace,amplitude=5pt,aspect=.5,angle=20}]  ($(A-6-1.south west)+(-.2,0)$) -- ($(A-2-1.north west)+(-.2,0)$) ;
\node[black, dashed , fill =white,draw, rounded corners, inner sep=1mm] at ($(barycentric cs:A-6-1=1,A-2-1=1) -(1,0)$) {\textcolor{black}{$\alpha_1$}};

\draw[gray,decorate,decoration={brace,amplitude=5pt,aspect=.5,angle=20}]  ($(A-12-1.south)-(1.5,0)$) -- ($(A-2-1.north)-(1.5,0)$) ;
\node[black, dashed , fill =white,draw, rounded corners, inner sep=1mm] at ($(barycentric cs:A-12-1=1,A-2-1=1) -(2.33,0)$) {\(\begin{array}{c}\alpha_2 \\= \\ \alpha_1\beta_1\end{array}\)};

\draw[gray,decorate,decoration={brace,amplitude=5pt,aspect=.5,angle=20}]   ($(A-2-9.north)+(3.5,0)$) -- ($(A-18-9.south)+(3.5,0)$)  ;
\node[black, dashed , fill =white,draw, rounded corners, inner sep=1mm] at ($(barycentric cs:A-18-9=1,A-2-9=1) +(4.25,0)$) {\(\begin{array}{c}\alpha_3\end{array}\)};

\draw [BarreStyle=gray,dotted] (A-3-3.north)  to (A-6-3.south) ;
\node[below right= 1.2cm and 0.1cm of A-3-3, black, dashed , fill =white,draw, rounded corners, inner sep=1mm] {\textcolor{black}{$s_1$}};
\draw [BarreStyle=gray] (A-3-5.north)  to (A-12-5.south) ;
\node[below right= .65cm and -0.25cm of A-5-5, black, dashed , fill =white,draw, rounded corners, inner sep=1mm ] {\textcolor{black}{$s_2$}};
\draw [BarreStyle=gray] (A-3-7.north)  to (A-18-7.south) ;
\node[below right= .55cm and -0.cm of A-10-7, black, dashed , fill =white,draw, rounded corners, inner sep=1mm ] {\textcolor{black}{$s_3$}};
\node[below right= .42cm and 0.6cm of A-10-7, inner sep=1mm ] {\textcolor{black}{$=\sect{t^{{\Orb_t(111)}-1}}{222}$}};

\end{tikzpicture}

\end{center}

\caption{Computation of the orbit of~\(1^j\) under the action of~\(t\) in a bounded automaton.}\label{fig:orbit}
\vspace{-0.6cm}
\end{figure}

\begin{proposition}\label{prop:boundorder}
	Let~\(\auta=\auttuple\) be a Mealy automaton with bounded activity under normal form. If there is some~\(t \in \QQ\) with~\(\delta_{i}{(t)} = t\) and~\(\rho_{t}{(i)} \neq i\), then \(\rho_{t}\) has infinite order.
\end{proposition}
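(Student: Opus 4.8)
\textit{The plan.} Set $g=\rho_t$ and $j=\rho_t(i)$, so $j\neq i$. The equality $\delta_i(t)=t$ gives the self‑similarity $\sect{g}{i}=\rho_{\delta_i(t)}=g$, hence $\sect{g}{i^n}=g$ and $\act{g}{i^n}=j^n$ for every $n$, and $g$ maps $i^\infty$ to $j^\infty\neq i^\infty$. Since $\Orb_g(\mot x)$ divides the order of $g$ for every $\mot x\in\XX^*$, it suffices to prove that $\Orb_g(i^n)$ is unbounded; I will suppose $g$ has finite order and reach a contradiction.

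\textit{Using bounded activity.} The self‑loop at $t$ is a nontrivial simple cycle, since $\rho_t(i)\neq i$ forces $g\neq\id$. First, $i$ is the only letter looping at $t$: if $L=\{x:\delta_x(t)=t\}$ had size $\geq 2$ then $\sect{g}{\mot x}=\rho_t=g\neq\id$ for each of the $\size{L}^\ell$ words $\mot x\in L^\ell$, so $\alpha_g$ would grow exponentially. Second, for $x\neq i$ the state $\delta_x(t)$ reaches no nontrivial cycle: otherwise that cycle and the self‑loop at $t$ would be two nontrivial simple cycles accessible one from another (they are distinct, the normal form being acyclic away from self‑loops), contradicting degree‑$0$ activity. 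In the normal form---a directed acyclic graph whose only leaf, hence only state carrying a trivial self‑loop, is $\id$---a state reaching no nontrivial cycle has every state of its cone equal to $\id$ or leading to $\id$, so from $\delta_x(t)$ one reaches $\id$ within $\size{\QQ}$ steps and $\rho_{\delta_x(t)}$ is finitary. As the finitary elements form a section‑closed subgroup and states reachable from a finitary state are finitary, $\sect{g}{\mot y}$ is finitary whenever $\mot y$ is not a power of $i$: writing $\mot y=i^a z\cdots$ with $z\neq i$, the first $a$ letters keep the active state at $t$ and the next one lands in $\rho_{\delta_z(t)}$.

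\textit{The contradiction.} Suppose $g$ has finite order. With $v_n:=\sect{g^{\Orb_g(i^n)}}{i^n}$, the orbit‑product rule gives $\Orb_g(i^{n+1})=\Orb_g(i^n)\cdot\Orb_{v_n}(i)$, so $(\Orb_g(i^n))_n$ is non‑decreasing and bounded, hence eventually constant; thus there is $n_0$ with $\Orb_{v_n}(i)=1$ for all $n\geq n_0$. Then $v_{n+1}=\sect{v_n}{i}$ for $n\geq n_0$, and iterating shows $v_{n_0}$ fixes every $i^k$, i.e.\ fixes $i^\infty$. On the other hand, expanding $v_{n_0}=\sect{g^{m}}{i^{n_0}}$ (with $m=\Orb_g(i^{n_0})$) by the cocycle rule exhibits $v_{n_0}$ as the composite of $m$ sections of $g$ taken along the $g$‑orbit of $i^{n_0}$; the section at $i^{n_0}$ itself is $\sect{g}{i^{n_0}}=g$ (and it is the one applied first), while the other $m-1$ sections, taken at length‑$n_0$ words distinct from $i^{n_0}$, are finitary by the previous paragraph. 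Hence $\act{v_{n_0}}{i^\infty}=\act{F}{(\act{g}{i^\infty})}=\act{F}{j^\infty}$ for some finitary $F$; but $F$ alters only a bounded prefix, so $\act{F}{j^\infty}\neq i^\infty$ because $i\neq j$. This contradicts $v_{n_0}$ fixing $i^\infty$, so $\rho_t$ has infinite order.

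\textit{Main obstacle.} The real work is the middle paragraph: converting ``bounded activity, in normal form'' into ``every section of $g$ at a word that is not a power of $i$ is finitary''. This is precisely what prevents the cancellations that let a ``swap all letters'' automaton have order $2$; once it is available, the rest is routine orbit‑signalizer bookkeeping (the computation drawn in Figure~\ref{fig:orbit}).
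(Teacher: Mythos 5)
Your proof is correct and follows essentially the same route as the paper's: track the orbit signalizer along the branch $i^*$, use bounded activity in normal form to see that every section of $g$ off that branch is finitary, and conclude that the orbit labels cannot stabilize at $1$ because the leading factor $g$ persistently sends $i$ to $j\neq i$. Your contradiction via the action on $i^\infty$ is a cleaner packaging of the paper's cross-diagram computation (Figure~\ref{fig:orbit}), and your middle paragraph spells out a finitariness claim that the paper leaves implicit.
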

\begin{proof}
	Up to renaming, we can assume that~\(i=1\) and \(\rho_{t}{(1)} =2\). We  use the orbit signalizer of~\(t\) to prove that~\(\rho_{t}\) has infinite order (see Fig.~\ref{fig:orbit}): put~\(\alpha_j =\size{\Orb_t(1^j)}\).
Since the activity is bounded, the set~\(\set{\sect{t^{\size{\Orb_t(1^j)}}}{1^j}}_j\) is finite, so there is a self-loop~\(\sect{t^{\alpha_i}}{1^i} = \sect{t^{\alpha_{i+j}}}{1^{i+j}}\). By putting  \(\sect{t^{\alpha_i}}{1^i} = ts_\alpha\) we get that~\(\sect{(ts_i)^\beta}{1^j} = ts_i\) for some integer~\(\beta=\size{\Orb_{ts_i}(1^{j})}\).
Suppose that the size of the orbit is 1. We obtain:
\vspace{-0.6cm}
\begin{center}
	\[
\begin{array}{ccc}
	&1^j	&			 \\
	
t	&\lacroix& t\\

	& 2^j	&		 \\	
s_i	&\lacroix& s_i\\

	&1^j	&		 \\		
 
\end{array}
\text{ but }
\begin{array}{ccc}
	&1	&			 \\
	
t	&\lacroix& t\\

	&2	&		 \\	
s_i	&\lacroix& \id\\

	&1	&		 \\		
 
\end{array}
\text{ so ~\(s_i = \id\) and }
\begin{array}{ccc}
	&1	&			 \\
	
t	&\lacroix& t\\

	&2	&		 \\	
\id	&\lacroix& \id\\

	&1	&		 \\		
 
\end{array}
\]
\end{center}
Hence~\(2=1\), contradiction. The size of the orbit of~\(\rho_t\)  increases strictly through the cycle~\(1^\beta\), whence~the order of \(\rho_t\)--the lowest common multiple of the  paths in the orbit signalizer of~\(t\) (\cite{BBSZ13})--is infinite.
\qed
\end{proof}

From this result, we get:
\begin{theorem}
	The probability that the group generated by an automaton with polynomial activity has an element of infinite order goes to~\(1\) as the size of the alphabet goes to infinity.
\end{theorem}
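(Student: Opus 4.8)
\medskip
\noindent\emph{Proof plan.}
The only structural input is the remark just made: the activity of a Mealy automaton is unaffected by the choice of its production functions. Fixing a stateset \(\QQ\) with \(|\QQ|\ge 2\) (one state distinguished as \(\id\)) and an alphabet \(\XX\) of size \(k\), a uniformly random automaton with polynomial activity can thus be sampled by first drawing \(\delta\) uniformly among the transition structures inducing polynomial activity, and then drawing the production functions \(\rho=(\rho_q)_{q\neq\id}\) uniformly and independently. I would condition on \(\delta\). If \(\delta\) induces finitary activity, then \(\gauta\) is finite~\cite{Rus10} and has no element of infinite order; I claim (this is the crux, see the last paragraph) that this event has probability \(o(1)\) as \(k\to\infty\). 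Otherwise the transition graph of \(\auta\) contains a nontrivial simple cycle, and it remains to show that \(\gauta\) then has an element of infinite order with probability \(1-o(1)\), uniformly in \(\delta\).

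So suppose \(\delta\) is not finitary. Polynomiality of the activity means no two nontrivial simple cycles are mutually accessible, so these cycles are partially ordered by reachability; fix a minimal one, \(C\colon t=t_1\xrightarrow{i_1}t_2\to\cdots\to t_m\xrightarrow{i_m}t_1\), and let \(\QQ'\) consist of \(\id\) together with every state accessible from \(C\) (these include the states of \(C\) itself). Then \(\QQ'\) is closed under all \(\delta_x\); the induced sub-automaton \(\auta'\) has \emph{bounded} activity, since by minimality \(C\) is the only nontrivial simple cycle it reaches, and it still has a unique identity state; and \(\auta'\) generates the subgroup \(\langle\rho_q\mid q\in\QQ'\rangle\le\gauta\) with no new relations, because automaton groups act faithfully on \(\XX^*\) and \(\QQ'\) is \(\delta\)-closed. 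It therefore suffices to exhibit an element of infinite order in \(\pres{\auta'}\). Pass to the normal form \(\mathcal B=\wop{d\ell\!\!}{\auta'}\) of \(\auta'\), a bounded automaton under normal form, by construction, generating \(\pres{\auta'}\) --- in which the cycle \(C\) has collapsed to a self-loop at the state \(t\) over some letter \(w\in\XX^{d\ell}\) that may be chosen with first coordinate \(i_1\). By the recursion defining the action of \(\rho_t\) on words, the first letter of \(\rho_t(w)\) is \(\rho_t(i_1)\), so \(\rho^{\mathcal B}_t(w)\neq w\) as soon as \(\rho_t(i_1)\neq i_1\); in that case Proposition~\ref{prop:boundorder} applied to \(\mathcal B\) yields that \(\rho_t\), an element of \(\pres{\auta'}\le\gauta\), has infinite order. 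Since \(\rho_t\) is a uniform random permutation of the \(k\)-letter alphabet, \(\Pr(\rho_t(i_1)=i_1)=1/k\to0\), and this bound is uniform in \(\delta\); combined with the conditioning, this proves the theorem. (If \(\delta\) already has a self-loop \(\delta_i(t)=t\) with \(t\neq\id\), one may run the argument from that loop directly, with \(\ell=1\); I would in fact expect a generic polynomial-activity structure to contain such a self-loop, which would also streamline the next step.)

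The genuine obstacle is that first step: proving that finitary transition structures --- those with no nontrivial simple cycle --- form a vanishing fraction of the polynomial-activity transition structures on \(\QQ\) and \(\XX\) as \(k\to\infty\). The difficulty is that this is a conditioned family whose uniform measure has no transparent product description, so one cannot simply compute it. I would follow the route announced before Proposition~\ref{prop:boundorder}: first show that finitary structures are negligible already among the \emph{bounded}-activity structures --- a more tractable count, a bounded structure being, up to collapsing its cycles, a single reachable nontrivial loop with all remaining states feeding into it or into \(\id\) --- and then reduce the polynomial case to the bounded one, for instance by arguing that bounded-activity structures dominate among polynomial-activity ones as \(k\to\infty\) (higher activity degrees require longer, more constrained chains of mutually-inaccessible cycles), or by carrying out the analogous count directly. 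A last routine point is that the passages to \(\auta'\) and then to \(\mathcal B\) preserve the relevant randomness: they do, since \(\mathcal B\) is built from \(\auta'\) purely through the \(\wopa\)-construction, which only composes and rearranges the prescribed \(\rho_q\)'s, so the first letter output by \(\rho^{\mathcal B}_t\) on \(w\) really is the value \(\rho_t(i_1)\) of the unconditioned uniform permutation \(\rho_t\).
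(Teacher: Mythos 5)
Your treatment of the non-finitary case matches the paper's: fix a state \(t\) on a minimal nontrivial cycle, note that the transition structure (together with the choice of trivial state) determines the activity so that the \(\rho_q\), \(q\neq\id\), remain uniform and independent after conditioning, pass to the normal form where the cycle becomes a self-loop over a word beginning with \(i_1\), and invoke Proposition~\ref{prop:boundorder} on the event \(\rho_t(i_1)\neq i_1\), which has probability \(1-1/k\). Your restriction to the \(\delta\)-closed sub-automaton \(\auta'\) of states reachable from the minimal cycle is in fact a welcome refinement: the paper applies Proposition~\ref{prop:boundorder} directly inside a \(\Pol(d)\) automaton, whereas the proposition as stated requires bounded activity, and your \(\auta'\) supplies exactly that hypothesis.

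The gap is the one you flag yourself: you never prove that \(\Pol(-1)\) is asymptotically negligible in \(\Pol(d)\), and the route you sketch for it is not workable as stated. Your plan is to show finitary is rare among \emph{bounded} structures and then argue that bounded structures dominate among polynomial ones; that second step is both unproven and unnecessary, and there is no reason to expect it to be easy (the relative weight of the activity degrees under the conditioned uniform measure is precisely the kind of quantity you earlier observe has ``no transparent product description''). The paper's argument avoids any such comparison by a direct injection from \(\Pol(-1)\times\XX\) into \(\Pol(d)\setminus\Pol(-1)\): every finitary \(\auta\) has a state \(t\) with \(\delta_i(t)=\id\) for all \(i\) (take \(t\) canonical, e.g.\ minimal among such states); replacing the single transition \(\delta_i(t)=\id\) by \(\delta_i(t)=t\) produces, for each of the \(k\) letters \(i\), a bounded non-finitary automaton \(\auta_i\), and \(\auta\) is recoverable from \(\auta_i\) because the added self-loop is the unique nontrivial cycle of \(\auta_i\). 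Injectivity gives \(k\,|\Pol(-1)|\le|\Pol(d)\setminus\Pol(-1)|\), hence \(\Pr(\Pol(-1)\mid\Pol(d))\le 1/(k+1)\to 0\), and the law of total probability finishes the proof (and yields the explicit bound \((k-1)/(k+1)\) recorded in the following proposition). Without this counting step, or an equivalent one, your argument only establishes the theorem conditionally on \(\delta\) being non-finitary.
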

\begin{proof}
	We first prove that groups generated by automata with polynomial non finitary activity  generically  have an element of infinite order: let~\(\auta=\auttuple\) be an automaton in~\(\Pol(d)\setminus\Pol(-1)\) and let~\(t\) be a state with bounded activity on a nontrivial cycle. Since the activity does not depend on the choice of the production functions (except for the trivial state), we can consider the set~\(\mathcal{C}_\auta\) of automata in~\(\Pol(d)\setminus\Pol(-1)\) with same transition functions and trivial state. Among~\(\mathcal{C}_\auta\), we have~\(\rho_t(i)\neq i\) with probability~\(1-1/k\), so, in the normal form,~\(t\) is on a cycle labelled by~\(i\mot{x} \in \XX^\ell\) with~\(\rho_t(i\mot{x})\neq i\mot{x}\). We can apply Proposition~\ref{prop:boundorder}.\\
Now we show that the set~\(\Pol(-1)\)  has measure 0 in the set~\(\Pol(d), \: d \geq 0\) . If an automaton~\(\auta\) has polynomial activity, then there is at least one state~\(t\) satisfying~\(\delta_i(t) = \id\) for all~\(i\). Given~\(\auta \in \Pol(-1) \), we can build~\(k\) automata~\(\auta_i\) with bounded but not finitary activity by changing for exactly one letter~\(\delta_i(t) = \id\) to~\(\delta_i(t) = t\). If we consistently chose~\(t\) to be, \eg, the minimal among acceptable states, we can uniquely reconstruct~\(\auta\) from these~\(\auta_i\), whence the result.\\
We conclude using the law of total probability: the probability that an automaton in~\(\Pol(d)\) has an element of infinite order is equal to the probability that it has an element of infinite order given it belongs to~\(\Pol(d)\setminus \Pol(-1)\) times the probability of the later ; we showed that both goes to one, the result follows.
\qed
\end{proof}
From the proof we extract the following:
\begin{proposition}
		The probability that the group generated by a automaton in~\(\Pol(0)\)  on an alphabet of size~\(k\) has an element of infinite order is at least~\({k-1}\over{k+1}\).
\end{proposition}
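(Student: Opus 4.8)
The plan is to split $\Pol(0)=\bigl(\Pol(0)\setminus\Pol(-1)\bigr)\sqcup\Pol(-1)$ and combine two estimates by the law of total probability, just as in the proof of the preceding theorem but keeping the constants. Writing $I$ for the event ``$\gauta$ has an element of infinite order'', one has
\begin{equation*}
\Pr\bigl[\,I\mid\Pol(0)\,\bigr]\ \ge\ \Pr\bigl[\,I\mid\Pol(0)\setminus\Pol(-1)\,\bigr]\cdot\Pr\bigl[\,\Pol(0)\setminus\Pol(-1)\mid\Pol(0)\,\bigr],
\end{equation*}
so, since $\tfrac{k-1}{k}\cdot\tfrac{k}{k+1}=\tfrac{k-1}{k+1}$, it suffices to show the first factor is at least $\tfrac{k-1}{k}$ and the second at least $\tfrac{k}{k+1}$ (for a fixed alphabet of size $k$ and any fixed stateset large enough that both conditioning events are nonempty).

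For the first factor I would rerun the theorem's argument while tracking the probability. Condition on the data that determines membership in $\Pol(0)\setminus\Pol(-1)$ — the transition functions and the trivial state — i.e.\ work inside a class $\mathcal C_\auta$ as in the theorem's proof; this is legitimate because the activity is a function of the transition part alone. The automaton is not finitary, so it has a nontrivial simple cycle, and, being in $\Pol(0)$, every state on that cycle has bounded activity; in the normal form $\wop{d\ell\!\!}{\auta}$ the cycle becomes a self-loop at some state $t$ reading a letter $\mathbf x\in\XX^{d\ell}$. Unwinding the recursion for $\rho_t(\mathbf x)$, one gets $\rho_t(\mathbf x)=\mathbf x$ exactly when each state occurring along the original cycle fixes the corresponding letter; since those production functions are uniform and independent over $\mathcal C_\auta$, this has probability at most $1/k$. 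Hence with probability at least $\tfrac{k-1}{k}$ we have $\rho_t(\mathbf x)\neq\mathbf x$, and Proposition~\ref{prop:boundorder} applied to the normal form (which generates the same group) produces an element of infinite order in $\gauta$; averaging over transition functions and trivial state gives the bound.

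For the second factor it suffices to prove $|\Pol(0)\setminus\Pol(-1)|\ge k\,|\Pol(-1)|$, which gives $\Pr[\Pol(-1)\mid\Pol(0)]\le\tfrac1{k+1}$. This follows from an explicit injection $\Pol(-1)\times\XX\hookrightarrow\Pol(0)\setminus\Pol(-1)$. A finitary automaton has no nontrivial cycle, so the subgraph on its non-identity states is a finite directed acyclic graph all of whose paths reach $\id$; let $t$ be the $\QQ$-minimal non-identity state with $\delta_x(t)=\id$ for every $x$ (a sink of that graph), and for $i\in\XX$ let $\auta_i$ be $\auta$ with the single transition $\delta_i(t)$ rerouted from $\id$ to $t$. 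Then $\auta_i$ has exactly one nontrivial simple cycle — the self-loop at $t$ — hence bounded, non-finitary activity; it still has a unique identity state, because by the standing uniqueness assumption $t$ already induces a nontrivial element of $\gauta$, so $\rho_t\ne\id$ and $t$ still induces a nontrivial element after the modification, while no other state changes the element it induces; and $(\auta,i)$ is recovered from $\auta_i$ by taking $t$ to be the unique state carrying a nontrivial self-loop and $i$ its label. Counting yields the claim, and multiplying the two factors gives $\Pr[I\mid\Pol(0)]\ge\tfrac{k-1}{k+1}$.

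The main obstacle I anticipate is the bookkeeping in the first factor: one has to check carefully that conditioning on ``$\auta\in\Pol(0)\setminus\Pol(-1)$'' really leaves the production function of the chosen state $t$ distributed so that it fixes the designated cycle-letter with probability at most $1/k$ (the uniqueness-of-identity constraint couples the various $\rho_q$ slightly), and that the normal form always exhibits such a self-loop at a genuine alphabet letter. The injection in the second factor is elementary, the only wrinkle being statesets too small to contain a usable sink $t$, which are immaterial to the asserted bound.
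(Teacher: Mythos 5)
Your proof is correct and takes essentially the same route as the paper's: the same decomposition of $\Pol(0)$ via the law of total probability, the same bound of $\tfrac{k-1}{k}$ coming from the production function of a state on a nontrivial cycle together with Proposition~\ref{prop:boundorder}, and the same $k$-to-one injection $\Pol(-1)\times\XX\hookrightarrow\Pol(0)\setminus\Pol(-1)$ yielding the factor $\tfrac{k}{k+1}$. Your write-up is in fact more careful than the paper's about the injection preserving uniqueness of the identity state and about the conditioning inside $\mathcal C_\auta$, but the argument is the same.
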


\section{Conclusion and future work}
In this work, we proved, for various important classes of Mealy automata, that the generated groups have generically an element of infinite order, thus are infinite. It is natural to wonder whether other properties, such as non-amenability,  are generic and to extend these results to the full class of automaton group.

One interesting direction is to determine if generating a free or an infinitely presented group is generic in a class. These properties are mutually exclusive. Automata with polynomial activity cannot generate free groups~\cite{Nek10}, while reversible ones can~\cite{VoVo07} ; infinitely presented groups can be found in both classes.\\
It would be striking to find two classes and a group property which is nontrivial in both classes yet generically true in one and generically false in the other.\smallskip

\textit{Acknowledgements:} The author thanks Ville Salo who initiated this work and for interesting discussions, and Matthieu~Picantin and Jérémie~Brieussel for their comments. He was supported by Academy of Finland grant 296018 and  by the French \emph{Agence Nationale de la~Recherche}
through the project~AGIRA.


\bibliographystyle{amsplain}
\bibliography{biblio_main}

\end{document}